\title{Dense Steiner problems:\\ Approximation algorithms and inapproximability}
\author{
  Marek Karpinski\inst{1} \and
  Mateusz Lewandowski\inst{2} \and
  Syed Mohammad Meesum\inst{2} \and
  Matthias Mnich\inst{3}
}
\institute{
  Universit{\"a}t Bonn, Institut f{\"u}r Informatik, Bonn, Germany
  \and
  University of Wroc{\l}aw, Institute of Computer Science, Wroc{\l}aw, Poland
  \and
  TU Hamburg, Institute for Algorithms and Complexity, Hamburg, Germany
 }
\spnewtheorem{fact}[theorem]{Fact}{\bfseries}{\itshape}
\spnewtheorem{observation}[theorem]{Observation}{\bfseries}{\itshape}
\begin{document}

\maketitle

\begin{abstract}
  The {\sc Steiner Tree} problem is a classical problem in combinatorial optimization: the goal is to connect a set $T$ of terminals in a graph $G$ by a tree of minimum size.
  Karpinski and Zelikovsky (1996) studied the $\delta$-dense version of {\sc Steiner Tree}, where each terminal has at least $\delta |V(G)\setminus T|$ neighbours outside $T$, for a fixed $\delta > 0$.
  They gave a PTAS for this problem.
  
  We study a generalization of pairwise $\delta$-dense {\sc Steiner Forest}, which asks for a mini\-mum-size forest in $G$ in which the nodes in each terminal set $T_1,\dots,T_k$ are connected, and every terminal in $T_i$ has at least $\delta |T_j|$ neighbours in $T_j$, and at least $\delta|S|$ nodes in $S = V(G)\setminus (T_1\cup\dots\cup T_k)$, for each $i, j$ in $\{1,\dots, k\}$ with $i\neq j$.
  Our first result is a polynomial-time approximation scheme for all $\delta > 1/2$.
  Then, we show a $(\frac{13}{12}+\varepsilon)$-approximation algorithm for $\delta = 1/2$ and any $\varepsilon > 0$.
  
  We also consider the $\delta$-dense {\sc Group Steiner Tree} problem as defined by Hauptmann and show that the problem is $\mathsf{APX}$-hard.
\end{abstract}

\section{Introduction}
\label{sec:introduction}
The {\sc Steiner Tree} problem in graphs is a classical problem in combinatorial optimization.
It takes as input a graph $G$ together with a distinguished subset $T\subseteq V(G)$ of nodes called \emph{terminals}, and seeks a minimum-size tree which connects all terminals in $T$.
The currently best-known approximation factor achievable in polynomial time is $\ln(4) + \varepsilon$ for any $\varepsilon > 0$\footnote{That algorithm approximates the minimum-weight solution in graphs with non-negative edge weights.}, and is due to Byrka et al.~\cite{ByrkaGRS2013}.
Of considerable interest is also the generalization of {\sc Steiner Tree} to multiple terminal sets $T_1,\dots,T_t$, where the goal is to find a minimum-size forest such that all terminals of each set $T_i$ belong to the same connected component.
For this problem, which is known as {\sc Steiner Forest}, a 2-approximate solution can be found in polynomial time~\cite{AgrawalKR1995,WilliamsonGMV1995}.
The nodes in $S:=V(G)\setminus(T_1,\dots,T_k)$ are called \emph{Steiner nodes}.

Karpinski and Zelikovsky~\cite{KarpinskiZ1997} considered the $\delta$-dense version of {\sc Steiner Tree}, where every terminal has at least $\delta |V(G)\setminus T|$ neighbours outside~$T$, for some fixed $\delta > 0$.
They gave a PTAS for this problem.
Subsequently, Hauptmann~\cite{Hauptmann2013} studied dense versions of problems related to {\sc Steiner Tree}, such as {\sc Prize-Collecting Steiner Tree}, {\sc Group Steiner Tree}, {\sc $k$-Steiner Tree}, and claimed polynomial-time approximation schemes (PTAS) for these problems.
He also obtained a $1+O\left(\frac{\sum_i^k \log(|T_i|)}{\sum_i^k |T_i|}\right)$-approximation for a certain version of {\sc Steiner Forest} where each terminal in $T_i$ is adjacent to at least $\delta|V(G)\setminus T_i|$ nodes in $V(G)\setminus T_i$.
All his results use the star contractions technique in which, after a number of greedy contraction steps which reduce the number of terminals, the remaining problem is solved optimally.

It is open if {\sc Steiner Tree} is $\mathsf{NP}$-hard for $\delta$-dense inputs.
Likewise, the complexity of {\sc Steiner Forest} for pairwise $\delta$-dense instances is open.

\medskip
\noindent
\textbf{Our contributions.}
We shed light on the exact and approximate solvability of pairwise $\delta$-dense {\sc Steiner Forest}.
\begin{theorem}
\label{thm:more-than-1-2-dense}
  Let $\delta > \frac{1}{2}$. {\sc Steiner Forest} admits a polynomial-time approximation scheme on pairwise $\delta$-dense instances.
  On pairwise $\delta$-dense instances without Steiner nodes, an optimal solution can be found in polynomial time.
\end{theorem}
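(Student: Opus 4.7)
\emph{Without Steiner nodes.} The starting point is the observation that for $\delta > 1/2$ and any $i \neq j$, any two terminals $u,v \in T_j$ each have strictly more than $|T_i|/2$ neighbours in $T_i$, so by inclusion--exclusion they share a common neighbour in $T_i$. Hence $G[T_i \cup T_j]$ is connected (with diameter at most $3$), and by induction $G[\bigcup_{i \in B} T_i]$ is connected for every $B \subseteq \{1,\dots,k\}$ with $|B| \geq 2$. Because a forest on $n$ vertices with $c$ components has exactly $n-c$ edges, every feasible solution is obtained by choosing a partition $\mathcal{P}$ of $\{1,\dots,k\}$ and taking a spanning tree of $G[\bigcup_{i \in B} T_i]$ for each block $B \in \mathcal{P}$, at cost $n - |\mathcal{P}|$. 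Every block of size at least $2$ is realisable by the structural observation, while a singleton block $\{i\}$ is realisable iff $G[T_i]$ is connected. Setting $I = \{i : G[T_i]\text{ connected}\}$ and $D = \{1,\dots,k\} \setminus I$, the optimum therefore keeps the indices in $I$ as singletons and groups the indices in $D$ into pairs (with one triple when $|D|$ is odd), giving $|\mathcal{P}| = |I| + \lfloor |D|/2 \rfloor$ when $|D| \neq 1$, and $|\mathcal{P}| = |I|$ when $|D| = 1$; all steps are polynomial.

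\emph{PTAS for the general case.} I would adapt the star-contraction paradigm of Karpinski--Zelikovsky~\cite{KarpinskiZ1997} to pairwise-dense Steiner Forest. For each terminal set $T_i$, a double-counting argument on the bipartite graph between $T_i$ and $S$ (using $\delta > 1/2$) yields a Steiner hub $s$ with $|N(s) \cap T_i| > |T_i|/2$; the star rooted at $s$ with its $T_i$-neighbours as leaves merges more than half of $T_i$ at amortised cost close to one edge per newly connected terminal. Iterating such contractions across all terminal sets reduces the number of still-unconnected terminals geometrically, and once the instance is small enough (the threshold chosen as a function of $\varepsilon$) it is solved exactly by a Dreyfus--Wagner-style dynamic program in polynomial time.

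\emph{Main obstacle.} The difficult step is the charging analysis that yields a $(1+\varepsilon)$-approximation: each contracted star has to be paid for by OPT's share of the corresponding $T_i$, in a way that is robust to OPT being a forest whose components may span several terminal sets at once. The density $\delta > 1/2$ is what makes the local per-terminal bound survive along the greedy sequence, since no matter the history of previous contractions each remaining $T_i$ still admits a hub of size more than half, so the amortised cost never deteriorates. Making this robust against the forest (as opposed to tree) structure of OPT, and integrating it with the correct stopping condition for the Dreyfus--Wagner phase, is the technical heart of the proof.
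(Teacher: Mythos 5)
Your first part (the exact polynomial-time algorithm when there are no Steiner nodes) is correct and essentially the paper's argument: you use the same two ingredients --- that $G[T_i\cup T_j]$ is connected for $\delta>\frac12$ and that a forest's cost is $|V(G)|$ minus its number of components --- and your partition-counting formulation ($|I|+\lfloor|D|/2\rfloor$ blocks, with the $|D|=1$ exception) is a clean equivalent of the paper's rank-based case analysis of the optimal forest. (You do omit the degenerate infeasible case $k=1$ with $G[T_1]$ disconnected, but that is cosmetic.)

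The PTAS part, however, has a genuine gap, and you have flagged it yourself: you propose to re-derive a Karpinski--Zelikovsky-style star-contraction analysis directly for a forest-structured optimum, and you explicitly leave the charging argument ("robust against the forest structure of OPT") unresolved. That charging argument is exactly what the paper avoids, by means of a structural lemma you are missing: for any $\delta>0$, a pairwise $\delta$-dense instance has an optimal forest $F^\star$ that either consists of a \emph{single} tree or uses \emph{no} Steiner nodes. The proof is a local exchange: if $F^\star$ has two trees $H_1^\star,H_2^\star$ and $H_1^\star$ contains a Steiner node, then pairwise density gives every terminal of $H_1^\star$ an edge to some terminal of $H_2^\star$, so $V(H_1^\star)\cup V(H_2^\star)$ minus the Steiner nodes of $H_1^\star$ induces a connected subgraph; replacing the two trees by a spanning tree of that set deletes at least one vertex and hence does not increase the cost. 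Iterating yields the dichotomy. With it, the algorithm simply runs two candidates and returns the cheaper: (i) collapse all terminal sets into one set $T=\bigcup_i T_i$, observe that $(G;T)$ is a $\delta$-dense {\sc Steiner Tree} instance, and invoke the Karpinski--Zelikovsky PTAS as a black box; (ii) run your exact no-Steiner-node algorithm. No new contraction analysis, no Dreyfus--Wagner phase, and no forest-aware charging scheme is needed. Without this reduction (or a completed version of your direct analysis), your PTAS claim is not proved.
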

We give the proof in \autoref{sec:more-than-1-2}.
\begin{theorem}
\label{thm:1-2-pairwise-dense}
  {\sc Steiner Forest} admits a polynomial-time $(\frac{13}{12}+\varepsilon)$-approximation for any $\varepsilon > 0$, for all pairwise $\frac{1}{2}$-dense instances.
\end{theorem}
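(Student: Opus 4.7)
The plan is to sharpen the star-contraction technique behind \autoref{thm:more-than-1-2-dense} so that it still yields a constant-factor guarantee at the boundary $\delta=\frac{1}{2}$. When $\delta>\frac{1}{2}$, any two terminals of the same set $T_i$ share at least $(2\delta-1)|S|$ common Steiner neighbours; this slack is what lets the corresponding algorithm take arbitrarily large stars and drive the loss below any $1+\varepsilon$. At $\delta=\frac{1}{2}$ the slack vanishes and density only yields stars of size about $|T_i|/2$. The ratio $\frac{13}{12}$ appears as the amortized cost incurred when the best guaranteed star has $13$ leaves.

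The lower bound I would use is that any feasible forest contains a subtree spanning each $T_i$, so $\opt \geq \sum_i(|T_i|-1)$ up to a small correction when several $T_i$ share a component. The algorithm iteratively finds a maximum star $(v,L)$, where $v\in V(G)$ and $L\subseteq N(v)\cap T_i$ for some $i$, adds the $|L|$ edges from $v$ to $L$ to the solution, and contracts $\{v\}\cup L$ into a single super-terminal of $T_i$. A double-counting argument using the $\frac{1}{2}$-density shows that while some $|T_i|\geq 26$, there is always a $v\in S\cup\bigcup_{j\neq i}T_j$ adjacent to at least $|T_i|/2\geq 13$ terminals of $T_i$: since every such terminal has $\geq|S|/2$ neighbours in $S$, averaging provides a Steiner node with the required degree, and the same argument applies with $T_j$ in place of $S$. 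Each contraction adds $|L|$ edges and retires $|L|-1\geq 12$ units of the lower bound, so the amortized cost is $|L|/(|L|-1)\leq \frac{13}{12}$ per step.

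Once all $|T_i|<26$ the total active terminal count might still be $\Theta(k)$. Here I would exploit cross-density: a single node can be picked to contract terminals from several $T_i$ at once, driving $\sum_i|T_i|$ below a constant depending only on $\varepsilon$. The residual instance is then solved by exhaustive enumeration, and its contribution to the ratio is absorbed into the additive $\varepsilon$.

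The main obstacle will be controlling the density across contractions: the super-terminal obtained after merging $\{v\}\cup L$ need not itself have $\geq\frac{1}{2}|S|$ or $\geq\frac{1}{2}|T_j|$ neighbours in the reduced instance, so one has to argue that a slightly weakened but still useful density condition is preserved through every iteration and that the amortized count continues to go through. A secondary difficulty is the regime where $k$ is large while every $|T_i|$ is small; there one cannot rely on within-set contractions and must either bound $k$ against $\opt$ directly or handle many small terminal sets in parallel with a single star that exploits cross-set density.
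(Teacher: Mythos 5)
Your plan has a genuine gap, and it sits exactly where you relegate the problem to a ``secondary difficulty'': the regime of many small terminal sets is not a corner case but the entire substance of the theorem, and your framework cannot handle it. Your charging scheme compares the cost against the lower bound $\sum_i(|T_i|-1)$, but at $\delta=\tfrac12$ this bound is off by a factor up to $\tfrac32$, not $\tfrac{13}{12}$. Concretely, take $k$ terminal sets $T_i=\{a_i,b_i\}$ with $a_ib_i\notin E(G)$, no Steiner nodes, and all cross edges present (a complete graph minus the perfect matching $\{a_ib_i\}$): the instance is pairwise $\tfrac12$-dense, every $|T_i|=2<26$ so your star phase never fires, and the optimum must group sets into shared trees, giving $\opt=2k-\lfloor k/2\rfloor\approx\tfrac32\sum_i(|T_i|-1)$. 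Hence no algorithm can be within $\tfrac{13}{12}$ of your lower bound, and the amortized accounting ``$|L|$ edges retire $|L|-1$ units'' cannot certify the claimed ratio; the whole approximation question is how to partition small non-trivial sets into pairs/triplets whose unions are connected. The paper's proof lives entirely in this regime: it shows structurally that an optimal forest uses trees of rank at most $3$ (up to one exceptional tree and one ``special'' tree containing a trivial set), reduces the grouping problem to {\sc 3-Set Packing} over the terminal sets, and applies Cygan's $(\tfrac43+\eps)$-approximation; the constant arises as $\tfrac{13}{12}=\frac{4-3/4}{3}$ from combining the packing ratio with the bound $|V(G)|-t\geq 4\cdot\mathrm{opt}_{\textsc{pack}}$, via \autoref{fact:max-cc}. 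Your identification of $\tfrac{13}{12}$ with ``stars of $13$ leaves'' is a numerical coincidence, not the source of the bound.

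Even the large-set phase does not compose as stated. For the telescoping to work you would need the residual instance (all $|T_i|<26$, possibly $\Theta(k)$ sets) to be solved within $\tfrac{13}{12}$ of \emph{its} lower bound $\sum_i(|T_i'|-1)$ plus $\eps\cdot\opt$, which the example above rules out; charging instead against residual $\opt$ requires a claim of the form $\opt_{\mathrm{residual}}\leq\opt-\sum(|L|-1)$ that you have not established and that is not obvious once contractions mix terminals of $T_i$ with centers from $T_j$ or $S$. You also correctly flag, but do not resolve, that the density hypothesis is not preserved under contraction (the Karpinski--Zelikovsky analysis exploits the slack $2\delta-1>0$, which vanishes here), and the reduction to the no-Steiner-node case (via \autoref{thm:reduction-no-steiner}, with the one-component case handled by the dense {\sc Steiner Tree} PTAS) is missing. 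As written, the proposal establishes neither the ratio nor termination of its second phase; it is not a proof of the theorem, and it is not a variant of the paper's packing-based argument.
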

We give the proof in \autoref{sec:1-2-dense}. 
Our results show a connection between dense instances of {\sc Steiner Forest} and {\sc Set Packing} problem.
It is an interesting open question whether {\sc Steiner Forest} admits a PTAS on pairwise $\delta$-dense instances for all values of $\delta$.

Finally, in \autoref{sec:group-st} we consider {\sc Group Steiner Tree} on dense instances.
The {\sc Group Steiner Tree} problem takes as input a graph $G$ and terminal sets $T_1,\dots,T_t$, and the goal is to find a smallest subtree of~$G$ containing at least one node from each terminal set group $T_i$ for $i = 1,\dots,t$.
An instance is \emph{$\delta$-dense} if each terminal $v$ is adjacent to at least $\delta |V(G)\setminus (T_1\cup\dots\cup T_t)|$ nodes.
We give a reduction from {\sc Set Cover} to {\sc Group Steiner Tree} on dense instances.
The reduction implies that $\delta$-dense {\sc Group Steiner Tree} is $\mathsf{APX}$-hard.
The reader may contrast this with the PTAS that Hauptmann~\cite{Hauptmann2013} claimed for this problem.
Recall that if $\mathsf{P}\not=\mathsf{NP}$, no $\mathsf{APX}$-hard problem admits a PTAS.

\section{Preparations}
\label{sec:preliminaries}
Consider an instance of {\sc Steiner Forest} consisting of a graph $G$ and pairwise disjoint terminal sets $T_1,\dots,T_t$.
We first justify that we can work with instances without Steiner nodes.
For {\sc Steiner Forest} with general (non-negative) edge weights, this assumption can be made without loss of generality, as one could replace every Steiner vertex with a pair of terminals connected by a zero-weight edge.
For {\sc Steiner Forest} with unit edge weights, we first claim that such instances are hard without density condition.
This is captured by following theorem, whose proof we give in Appendix~\ref{sec:proof-unit-weight-thm}.

\begin{theorem}
\label{thm:unit-weight-apx-hard}
  Unit-weight {\sc Steiner Forest} without Steiner nodes is $\mathsf{APX}$-hard.
\end{theorem}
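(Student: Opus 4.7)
My plan is to establish APX-hardness by an L-reduction from unit-weight {\sc Steiner Tree}, exploiting that standard APX-hard instances of {\sc Steiner Tree} can be chosen so that the Steiner-node set is only linear in the optimum, which lets us ``simulate away'' all Steiner nodes without destroying the gap.

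The first ingredient is a hard {\sc Steiner Tree} instance with controlled structure. I would invoke the textbook reduction from {\sc Vertex Cover} on cubic graphs $H=(V,E)$: build the bipartite graph $G$ on $V\cup E$ in which each edge-vertex $e=uv$ is adjacent to its endpoints $u,v\in V$, and take $E$ as the terminal set and $V$ as Steiner nodes. A short count gives $\opt_{\mathrm{ST}}(G)=|E|-1+\opt_{\mathrm{VC}}(H)$, and since cubic graphs satisfy $|E|=\tfrac{3}{2}|V|$ and $\opt_{\mathrm{VC}}(H)\ge |V|/2$, the Steiner-node count obeys $|S|=|V|\le c\cdot\opt_{\mathrm{ST}}(G)$ for an absolute constant $c$. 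Because {\sc Vertex Cover} on cubic graphs is $\mathsf{APX}$-hard, this yields a family of $\mathsf{APX}$-hard unit-weight {\sc Steiner Tree} instances whose Steiner nodes are controlled by the optimum.

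Given such an instance $(G,T)$ with Steiner nodes $S=V(G)\setminus T$, I would then build a no-Steiner-node {\sc Steiner Forest} instance $(G',\mathcal{T}')$ as follows: attach, for every $v\in S$, a fresh pendant vertex $v'$ via a new edge $\{v,v'\}$, and declare $\{v,v'\}$ a two-element terminal set; also keep $T$ as a terminal set. Every vertex of $G'$ then belongs to exactly one terminal set, so $G'$ is free of Steiner nodes. Since each $v'$ has degree one, every feasible Steiner Forest in $G'$ must contain all $|S|$ ``crutch'' edges, and removing them from such a solution leaves a feasible Steiner Tree for $T$ in $G$. This yields the identity $\opt(G')=\opt(G)+|S|$ and, crucially, the cost-preserving equation
\[
  \mathrm{cost}(F')-\opt(G')=\mathrm{cost}\bigl(F'\cap E(G)\bigr)-\opt(G)
\]
for every feasible $F'\subseteq E(G')$. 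Combined with $\opt(G')\le (1+c)\,\opt(G)$, the pair $(f,g)$ with $f$ the construction above and $g(F')=F'\cap E(G)$ is an L-reduction with constants $\alpha=1+c$ and $\beta=1$, and $\mathsf{APX}$-hardness transfers.

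The one spot that requires care is the linearity bound $|S|=\oh(\opt(G))$; without it, the additive $|S|$ crutch edges could shrink any constant-factor gap to a $1+o(1)$ factor and destroy the reduction. This is why I would insist on starting from the cubic {\sc Vertex Cover} source, where this linearity is immediate, rather than from a generic $\mathsf{APX}$-hard {\sc Steiner Tree} instance.
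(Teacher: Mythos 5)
Your overall plan follows the same two-step route as the paper (reduce \textsc{Vertex Cover} on bounded-degree graphs to unit-weight \textsc{Steiner Tree}, then eliminate Steiner nodes by attaching pendant terminal pairs), but your first step has a genuine gap. In the plain bipartite incidence graph on $V\cup E$, the claimed identity $\opt_{\mathrm{ST}}(G)=|E|-1+\opt_{\mathrm{VC}}(H)$ is false. A Steiner tree that uses the vertex-node set $W\subseteq V$ has exactly $|E|+|W|-1$ edges, but for such a tree to exist $W$ must not only cover all edges of $H$: it must also induce a connected subgraph of $H$, since any path between two vertex-nodes alternates between vertex-nodes and edge-nodes, and consecutive vertex-nodes on it are adjacent in $H$. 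Hence $\opt_{\mathrm{ST}}(G)=|E|-1+\opt_{\mathrm{CVC}}(H)$, where $\opt_{\mathrm{CVC}}$ is the \emph{connected} vertex cover optimum; this can exceed $|E|-1+\opt_{\mathrm{VC}}(H)$ by $\Theta(\opt_{\mathrm{VC}})$ (for the cube graph $Q_3$ the minimum vertex cover is an independent bipartition class, so no optimal cover is connected), and the instance is even infeasible when $H$ is disconnected. As a result, mapping a near-optimal Steiner tree back to a vertex cover incurs the uncontrolled additive error $\opt_{\mathrm{CVC}}-\opt_{\mathrm{VC}}$, and your L-reduction from \textsc{Vertex Cover} does not go through as stated. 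The paper sidesteps exactly this issue by additionally connecting all vertex-nodes (the Steiner side) into a clique, so that any vertex cover $W$ extends to a Steiner tree of size $|E|+|W|-1$ and the identity $\opt_{\mathrm{ST}}=|E|-1+\opt_{\mathrm{VC}}$ holds; alternatively, you could keep your graph but reduce from \textsc{Connected Vertex Cover} on bounded-degree graphs, which is also $\mathsf{APX}$-hard.

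Your second step is fine and matches the paper's: the pendant vertex $v'$ forces the edge $\{v,v'\}$ into every feasible forest, giving $\mathrm{cost}(F')=|S|+\mathrm{cost}(F'\cap E(G))$ and $\opt(G')=\opt(G)+|S|$, and the bounded-degree source guarantees $|S|=\oh(\opt(G))$, so the additive shift preserves the gap. Once the first step is repaired (clique on the Steiner side, or a \textsc{Connected Vertex Cover} source), the rest of your argument is correct.
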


We consider $\delta$-dense instances of {\sc Steiner Forest}.
Formally, an instance $(G;T_1,\dots,T_k)$ of {\sc Steiner Forest} is \emph{pairwise $\delta$-dense} if each terminal $t\in T_i$ is adjacent to at least $\delta|T_j|$ nodes in $T_j$ and at least $\delta|S|$ nodes in $S = V(G)\setminus (T_1\cup\dots\cup T_k)$, for every $i = 1,\dots,k$ and $j\not=i$.

\begin{lemma}
\label{thm:reduction-no-steiner}
  For all $\delta >0$, any pairwise $\delta$-dense instance $(G;T_1,\dots,T_k)$ of {\sc Steiner Forest} admits an optimal solution $F^\star$ that either (i) has at most one connected component, or (ii) does not use Steiner nodes.
\end{lemma}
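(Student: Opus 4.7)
The approach is to start with an arbitrary optimal solution $F^\star$ and, while $F^\star$ violates both (i) and (ii), perform a local surgery that produces another optimal solution in which both the number of components and the number of Steiner nodes strictly decrease. The iteration must terminate at a witness for (i) or (ii).

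So suppose $F^\star$ has at least two components and contains a Steiner node $s$. Let $C$ be the component of $F^\star$ containing $s$ and let $d$ be the degree of $s$ in $F^\star$. Optimality rules out two degenerate cases: if $d\leq 1$, then $s$ together with its (at most one) incident edge can be deleted without breaking any connectivity requirement, contradicting minimality; and, writing $C_1,\dots,C_d$ for the subtrees produced by deleting $s$ from $C$, each $C_i$ must contain at least one terminal, for otherwise all of $C_i$ together with the edge joining it to $s$ would be removable. Now pick any other component $D$ of $F^\star$, fix a terminal class $T_j\subseteq D$, and for each $i$ pick a terminal $t_i\in C_i$. By pairwise $\delta$-density with $\delta>0$, we have $|T_j\cap N_G(t_i)|\geq \delta|T_j|\geq 1$, so I may select a neighbour $t_i'\in T_j$ of $t_i$.

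Form $F'$ from $F^\star$ by removing $s$ and its $d$ incident edges and adding the $d$ edges $(t_i,t_i')$. I claim $F'$ is again an optimal Steiner Forest with one fewer Steiner node and one fewer component. The edge count is unchanged since $d$ edges are removed and $d$ new edges are added; the added edges are genuinely new, because they cross the cut between $C$ and $D$ in $F^\star$. Acyclicity follows by inserting the new edges one at a time: when $(t_i,t_i')$ is added, the endpoint $t_i\in C_i$ lies in a subtree not yet connected to the growing component built around $D$. Finally, every terminal class $T_\ell$ still sits inside a single component of $F'$: if $T_\ell\subseteq C$, its pieces scattered among the $C_i$ are rejoined through the new edges to $D$, while components of $F^\star$ other than $C$ and $D$ are left untouched.

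Iterating this surgery decreases both the Steiner-node count and the component count by exactly one per step, so after finitely many steps we arrive at an optimum in which either the Steiner-node count is $0$ (case (ii)) or the component count is $1$ (case (i)). The main subtlety is the verification that the surgery preserves all terminal connectivity requirements and introduces no cycles; both points reduce to the single observation that every change is confined to $C\cup D$, which afterwards becomes a single tree knitting together every terminal that used to lie in either $C$ or $D$ via the connector class $T_j\subseteq D$.
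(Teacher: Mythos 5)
Your proposal is correct and follows essentially the same exchange argument as the paper: use pairwise density to connect terminals of a Steiner-node-containing component directly to terminals of another component, merging the two while discarding Steiner nodes at no increase in edge count, and iterate until (i) or (ii) holds. The only cosmetic difference is that you remove one Steiner node per step (with an explicit acyclicity check), whereas the paper discards all Steiner nodes of the chosen tree at once by taking a spanning tree of the induced connected subgraph on the remaining vertices.
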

\begin{proof}
  Consider an optimal forest $F^\star$ for $(G;T_1,\dots,T_k)$ which consists of trees $H^\star_1,\dots, H^\star_\ell$.
  If $\ell = 1$, the claim holds.
  If $\ell\geq 2$, consider a tree $H^\star \in F^\star$ which contains at least one Steiner node.
  Without loss of generality, \mbox{$H^\star=H^\star_1$}.
  Let $S^\star_1$ be the set of Steiner nodes in~$H^\star_1$.
  Consider any tree $H^\star_2\not=H^\star_1$.
  By pairwise density, each terminal $t \in H^\star_1$ has an edge in $G$ to some terminal in $H^\star_2$.
  Thus, the graph induced by the nodes $V(H^\star_1) \cup V(H^\star_2) \setminus S^\star_1$ is connected.
  So we can remove $H^\star_1,H^\star_2$ from~$F^\star$, and replace them by a single tree spanning $V(H^\star_1) \cup V(H^\star_2) \setminus S^\star_1$.
  As $|S^\star_1|\geq 1$, this operation does not increase the value of the solution.
  Repeatedly applying this operation leads to an optimal solution which either does not use any Steiner nodes or has one connected component.
\qed
\end{proof}

The following observation is straightforward.
\begin{fact}
\label{fact:max-cc}
   Let $G$ be a graph and let $F$ be a forest in $G$.
   Then $|E(F)| = |V(G)| - \mathsf{comp}(F)$, where $\mathsf{comp}(F)$ denotes the number of trees in $F$.
\end{fact}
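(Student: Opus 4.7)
The plan is to prove this standard counting identity by summing the basic edge-count relation for trees over the connected components of $F$. Concretely, I would first recall (or re-prove in one line by induction on the number of vertices) that any tree $T$ on $n_T$ vertices has exactly $n_T - 1$ edges: removing any leaf yields a smaller tree, reducing both the vertex count and the edge count by one, so the invariant $|E(T)| = |V(T)| - 1$ is preserved down to the base case of a single vertex with no edges.

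Next, I would let $T_1,\dots,T_c$ be the connected components of $F$, where $c = \mathsf{comp}(F)$. These components partition $V(G)$ under the convention implicit in the statement that each vertex of $G$ belongs to exactly one tree of $F$ (isolated vertices counted as one-vertex trees). Since $F$ is a forest, each $T_i$ is a tree, so $|E(T_i)| = |V(T_i)| - 1$ by the previous step. Summing over $i$,
\[
|E(F)| = \sum_{i=1}^{c} |E(T_i)| = \sum_{i=1}^{c} \bigl(|V(T_i)| - 1\bigr) = |V(G)| - c = |V(G)| - \mathsf{comp}(F),
\]
using partition in the penultimate equality.

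An equivalent, slightly shorter route is induction on $|E(F)|$. For the base case $|E(F)| = 0$, every vertex of $G$ is an isolated component, so $\mathsf{comp}(F) = |V(G)|$ and the identity holds. For the inductive step, adding one edge to a forest which remains a forest must join two distinct components (otherwise it would close a cycle), so $|E(F)|$ increases by one while $\mathsf{comp}(F)$ decreases by one, keeping the difference $|V(G)| - \mathsf{comp}(F)$ in sync with $|E(F)|$.

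There is no real obstacle here; the only point that requires care is the interpretation that $F$ is regarded as spanning $V(G)$ (isolated vertices counting as single-node trees in $\mathsf{comp}(F)$), which is precisely the convention matching the statement.
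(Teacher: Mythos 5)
Your proof is correct. The paper states this fact without proof (labelling it straightforward), and your argument — summing $|E(T_i)| = |V(T_i)| - 1$ over the components, or equivalently the induction on $|E(F)|$ — is exactly the standard justification it leaves implicit; you are also right to flag that the identity uses $|V(G)|$ rather than $|V(F)|$, so it relies on the convention that $F$ spans $V(G)$ with isolated vertices counted as single-node trees, which is indeed how the paper uses it (maximizing the number of components of a spanning forest).
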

This means that we can view our problem equivalently as that of maximizing the total number of connected components.

Consider any tree $H$ that is part of feasible solution to a {\sc Steiner Forest} instance.
We define $\mathsf{rank}(H)$ as the number of terminal sets that $H$ spans.
Further, call a terminal set $T_i$ \emph{trivial} if it induces a connected subgraph of~$G$.
For each terminal $t\in T_i$, let $N_G(t)$ denote the neighbors of $t$ in $G$.

\section{Approximation Scheme for Pairwise $(>\frac{1}{2})$-Dense Steiner Forest}
\label{sec:more-than-1-2}
In this section we give the algorithm that yields the proof of \autoref{thm:more-than-1-2-dense}.
Let $\delta > \frac{1}{2}$ and consider a pairwise $\delta$-dense instance $(G;T_1,\dots,T_k)$ of {\sc Steiner Forest}.
We search for an optimal solution $F^\star$ which satisfies one of the properties of \autoref{thm:reduction-no-steiner}.

To approximate an optimal solution $F^\star$ which consists of a single connected component, we create a single terminal set $T = \bigcup_{i=1}^{\ell}T_i$.
Note that $(G;T)$ is a $\delta$-dense instance of {\sc Steiner Tree}, as each node $t\in T$ is adjacent to at least $\delta|S|$ nodes of $S = V(G)\setminus T$.
Thus, we fix some $\varepsilon > 0$ and run the Karpinski-Zelikovsky polynomial-time approximation scheme for {\sc Steiner Tree} on $\delta$-dense instances on input $(G;T)$ and $\varepsilon$, to obtain a $(1+\varepsilon)$-approximate solution $F$ to $F^\star$.

From now on, we focus on finding an optimal solution to $(G;T_1,\dots,T_k)$ which does not use any Steiner nodes.
This allows us to assume that $V(G)=\bigcup_{i=1}^k T_i$.
Recall that $\delta > \frac{1}{2}$; so we can make the following useful observation:
\begin{observation}
\label{observation:connected-pair}
  The subgraph $H_{ij}$ of the input graph $G$ which is induced by any two terminal sets $T_i,T_j$ is connected.
\end{observation}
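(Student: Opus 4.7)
The plan is to deduce this connectivity statement directly from the pairwise density condition, using the fact that $\delta > \tfrac{1}{2}$ forces heavy neighbourhood overlap. In fact, it suffices to prove the stronger claim that the bipartite subgraph $B_{ij}$ of $G$ spanned by the cross-edges between $T_i$ and $T_j$ is already connected, since $H_{ij} \supseteq B_{ij}$.

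The natural approach is by contradiction. Suppose $B_{ij}$ is disconnected, and let $C$ be any connected component of $B_{ij}$, with $A := C \cap T_i$ and $B := C \cap T_j$. The density condition says that every $t \in A$ has at least $\delta|T_j|$ neighbours in $T_j$; since $C$ is a component of $B_{ij}$, all of these neighbours must lie in $B$. Therefore, if $A \neq \emptyset$, then $|B| \geq \delta|T_j| > |T_j|/2$, and symmetrically if $B \neq \emptyset$, then $|A| > |T_i|/2$. A one-sided empty component is immediately ruled out: a component consisting of, say, nodes from $T_j$ only would contain a vertex whose $\delta|T_i| \geq 1$ required neighbours in $T_i$ are absent.

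Consequently every connected component of $B_{ij}$ contains strictly more than half of $T_i$ and strictly more than half of $T_j$. If $B_{ij}$ had two or more components, summing the $T_j$-sides of any two would already exceed $|T_j|$, a contradiction. Hence $B_{ij}$, and therefore $H_{ij}$, is connected.

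I do not anticipate a serious obstacle: the argument is essentially a one-line pigeonhole in the spirit of the classical Dirac-type connectivity results, and the only subtlety is handling the corner case where a putative component lies entirely on one side of the bipartition, which is dispatched by invoking the density condition for the non-empty side against the empty one. The condition $\delta > 1/2$ is used precisely to make the strict inequality $|B| > |T_j|/2$ hold, which is what prevents two disjoint components from coexisting.
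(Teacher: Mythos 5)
Your proof is correct and rests on the same pigeonhole idea as the paper's: $\delta > \frac{1}{2}$ forces every terminal to be adjacent to a strict majority of the other set, so two "pieces" cannot coexist. The paper phrases this as any two vertices of $T_i$ sharing a common neighbour in $T_j$ (plus one cross-edge), while you phrase it via components of the cross-edge bipartite graph each capturing more than half of both sides; these are the same argument in different clothing.
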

\begin{proof}
  Consider any two nodes $u,v \in T_i$.
  As $\delta > \frac{1}{2}$, $u$ has more than $\frac{|T_j|}{2}$ neighbours in $T_j$.
  The same can be said about $v$, and thus $u$ and $v$ share a common neighbour in $T_j$.
  By symmetry, any two vertices in $T_j$ are also connected in $G[T_i \cup T_j]$.
  Clearly, there is at least one edge between terminals in $T_i$ and $T_j$, hence the observation follows.
\qed
\end{proof}

An immediate corollary of the above observation is that for any two terminal sets $T_i,T_j$, there exists in $G$ a tree of rank 2 spanning $T_i\cup T_j$.
This suggests the greedy \autoref{greedy-more-1-2}.
\begin{algorithm}[h!]
  \caption{Greedy algorithm for pairwise $(>\frac{1}{2})$-dense Steiner forest\label{greedy-more-1-2}}
    \begin{algorithmic}[1]
      \Require A graph $G$ and terminal sets $T_1,\dots,T_k$ such that for each $t\in T_i$ and each $j \in [k]$, $|N_{T_j}(t)| > |T_j|/2$.
      \Ensure A Steiner forest $F$ for $(G;T_1,\dots,T_k)$ of minimum size.
      \item If $G = G[T_1]$ and $G$ is disconnected, then output ``No solution" and exit.
      \item Set $F=\emptyset$.
      \item Add to $F$ spanning trees of every trivial set $T$.
      \item Arbitrarily pair remaining terminal sets, and add corresponding rank-2 spanning trees to $F$.
      \item If there is a single set terminal $T$ left, add it to any tree in $F$ (choose a tree of larger rank).
    \end{algorithmic}
\end{algorithm}

We claim that \autoref{greedy-more-1-2} returns optimal solutions.
Consider an optimal forest $F^\star$ which consists of trees $H^\star_1,\dots, H^\star_k$.
Observe that $\mathsf{rank}(H^\star_i) \leq 3$, for otherwise we could replace $H^\star_i$ with multiple trees of rank $2$ and at most one of rank $3$, improving by this the value of $F$ (see \autoref{fact:max-cc}).
Moreover, there is at most one tree $H_i$ of rank $3$, for otherwise we could replace two such trees by three trees of rank $2$.

For $i = 1,2,3$, let $h^\star_i$ be the number of trees of rank $i$ in $F^\star$, and let $h_i$ be the number of trees of rank $i$ in $F$.
By \autoref{fact:max-cc}, it is enough to show that $h^\star_1 + h^\star_2 + h^\star_3 = h_1 + h_2 + h_3$.

Call a tree $H^\star_i$ as \emph{special} if it contains a trivial set $T$ and has rank larger than $1$.
Note that there are no special trees of rank $3$, for otherwise we could improve $F^\star$ by breaking such tree into two separate trees of rank 1 and~2.

We claim that there is at most one special tree in $F^\star$.
For otherwise, there are at least two trivial sets $T_i$ and $T_j$ which are spanned by rank-2 trees $H^\star_i$ and $H^\star_j$, respectively.
Then we could replace $H^\star_i,H^\star_j$ by three trees: a spanning tree of $G[T_i]$, a spanning tree of $G[T_j]$, and a spanning tree of all other terminals covered by $H^\star_i$ and $H^\star_j$.
Again, this would contradict the optimality of $F^\star$.

Observe that, if there is a special tree in $F^\star$, then $h^\star_3 = 0$, for otherwise we could improve upon $F^\star$.
Moreover, we can assume that if there is a special tree in $F^\star$, then $h^\star_2 = 1$: for if there was an additional rank-2 tree, we could replace the special tree and construct a trivial tree and a single rank 3 tree.

Finally, by construction of \autoref{greedy-more-1-2} it follows that if there is no special tree in $F^\star$, then $h^\star_1 = h_1$, $h^\star_2 = h_2$ and $h^\star_3 = h_3$.
If there is a unique special tree in $F^\star$, then we have $h^\star_2 = 1$ and $h^\star_3 = 0$.
Therefore, all terminals sets except one are trivial, and \autoref{greedy-more-1-2} finds a solution with $h^\star_1 = h_1$, $h^\star_2 = h_2$ and $h^\star_3 = h_3$.
This completes the proof of \autoref{thm:more-than-1-2-dense}.

\section{Approximation Algorithm for Pairwise $\frac{1}{2}$-Dense Steiner Forest}
\label{sec:1-2-dense}
In this section we give an approximation algorithm for pairwise $\frac{1}{2}$-dense {\sc Steiner Forest} when there are no Steiner vertices.
The algorithm uses, as a subroutine, an approximation algorithm for the {\sc 3-Set Packing} problem.
The currently best known approximation algorithm for {\sc 3-Set Packing} is a $(\frac{4}{3}+\varepsilon)$-approximation, and is due to Cygan~\cite{Cygan2013}.

We will now prove \autoref{thm:1-2-pairwise-dense}.
First note that \autoref{observation:connected-pair} is no longer true for $\delta = \frac{1}{2}$.
But a weakening still holds.
Let $\mathcal T$ be a family of at least two terminal sets, and consider the induced subgraph $G_{\mathcal T} := G[\bigcup_{T \in \mathcal T} T]$.
\begin{observation}
\label{observation:connected-or-2-components}
  The graph $G_{\mathcal T}$ is either connected or consists of two connected components.
\end{observation}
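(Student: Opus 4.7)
The plan is to argue by contradiction. Suppose $G_{\mathcal T}$ has at least three connected components, pick three of them $C_1, C_2, C_3$, and derive a contradiction from the pairwise $\frac{1}{2}$-density condition.

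The workhorse is the following partition principle: if two terminals $u, v$ lie in different connected components of $G_{\mathcal T}$ and both satisfy $|N_G(u)\cap T_i|, |N_G(v)\cap T_i| \geq |T_i|/2$ for some $T_i \in \mathcal T$, then $N_G(u)\cap T_i$ and $N_G(v)\cap T_i$ must be disjoint (any common vertex would form a $u$--$v$ path in $G_{\mathcal T}$), and together with the two lower bounds this forces the two neighborhoods to exactly bipartition $T_i$ into halves of size $|T_i|/2$. In particular, $T_i$ lies inside the union of the two components containing $u$ and $v$.

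First I would rule out that any $T_j \in \mathcal T$ is entirely contained in one component: if $T_j \subseteq C_1$, then by density every terminal of every other $T_i \in \mathcal T$ has at least one neighbor in $T_j$ and hence lies in $C_1$, so $G_{\mathcal T}$ would be connected, contrary to assumption. Consequently every set in $\mathcal T$ meets at least two of the components. Fix some $T_j \in \mathcal T$ and pick $u \in T_j \cap C_1$ and $v \in T_j \cap C_2$. For each other $T_i \in \mathcal T$, applying the partition principle to $u, v$ against $T_i$ (a different set, so the density condition furnishes the required $|T_i|/2$ neighbors) yields $T_i \subseteq C_1 \cup C_2$.

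The remaining subtlety is to show that $T_j$ itself lies in $C_1 \cup C_2$. Suppose instead some $w \in T_j \cap C_3$. Since $|\mathcal T| \geq 2$, there exists $T_i \in \mathcal T$ with $i \neq j$; applying the partition principle to $u$ and $w$ against this $T_i$ yields $T_i \subseteq C_1 \cup C_3$. Combined with $T_i \subseteq C_1 \cup C_2$ from the previous step, this forces $T_i \subseteq C_1$, contradicting the first step. Hence $T_j \subseteq C_1 \cup C_2$, so $\bigcup_{T \in \mathcal T} T \subseteq C_1 \cup C_2$ and $C_3$ is empty, the desired contradiction.

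The main obstacle is that, unlike the $\delta > \frac{1}{2}$ setting where any two density-constrained neighborhoods in $T_i$ automatically intersect, at $\delta = \frac{1}{2}$ they may be perfectly disjoint, so one cannot simply quote \autoref{observation:connected-pair}. The argument instead exploits that the ``bad'' disjoint case is rigid enough (the neighborhoods have exact size $|T_i|/2$ and bipartition $T_i$) that it can only organise the terminals of $T_i$ into two components; the $|\mathcal T|\geq 2$ hypothesis supplies the external set $T_i$ needed to propagate this rigidity back to the host set $T_j$ itself.
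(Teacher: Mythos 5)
Your proof is correct, and it hinges on the same density fact as the paper's own argument: the at least $|T_i|/2$ neighbours of any terminal all lie in that terminal's connected component, so each component must absorb at least half of every terminal set. The paper concludes immediately by counting (three components would together contain at least $\tfrac{3}{2}|T_i|$ terminals of some nonempty $T_i$), whereas your ``partition principle'' and the subsequent propagation through $T_j$ reach the same contradiction by a longer but valid route.
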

\begin{proof}
  Consider any component $C$ of $G[\bigcup_{T \in \mathcal T} T]$.
  Density implies that $C$ contains at least $\frac{1}{2} |T_i|$ terminals from $T_i$ for each $T_i \in \mathcal T$.
  Thus, $G[\bigcup_{T \in \mathcal T} T]$ has at most two components.
  Observe also, that when $G[\bigcup_{T \in \mathcal T} T]$ has two components $C_1, C_2$, then both $C_1$ and $C_2$ contain exactly half of all terminals from each set $T_i \in \mathcal T$.
  Moreover, terminals from two sets $T_i, T_j \in \mathcal T$ within each component are connected via a biclique.
\qed
\end{proof}

\begin{observation}
\label{observation:contains-trivial}
  If $\mathcal T$ contains a trivial set, then $G_{\mathcal T}$ is connected.
\end{observation}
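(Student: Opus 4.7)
\bigskip
\noindent\textbf{Proof plan for \autoref{observation:contains-trivial}.}
The plan is to derive the statement directly from the dichotomy established in \autoref{observation:connected-or-2-components} by a short argument by contradiction. Assume that $G_{\mathcal T}$ is \emph{not} connected; then by \autoref{observation:connected-or-2-components} it has exactly two connected components $C_1$ and $C_2$, each of which contains at least half of the terminals of every $T_i \in \mathcal T$.

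Now let $T \in \mathcal T$ be a trivial set, so that the induced subgraph $G[T]$ is connected. I would then observe that $G[T]$ is a subgraph of $G_{\mathcal T}$, so any two vertices that lie in the same connected component of $G[T]$ automatically lie in the same connected component of $G_{\mathcal T}$. Consequently, the whole vertex set $T$ must be contained in a single component, say $T \subseteq C_1$. This forces $|C_2 \cap T| = 0$. However, the conclusion of \autoref{observation:connected-or-2-components} gives $|C_2 \cap T| \geq |T|/2 \geq 1$ (using $|T| \geq 1$, since empty terminal sets are vacuous and can be discarded). The contradiction shows that $G_{\mathcal T}$ must in fact be connected.

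The argument has no technical obstacle: the only care needed is to (i) invoke that $G[T]$ being connected and a subgraph of $G_{\mathcal T}$ really does confine $T$ to one component of $G_{\mathcal T}$, and (ii) notice that the ``exactly half'' conclusion of \autoref{observation:connected-or-2-components} applies uniformly to every set in $\mathcal T$, including the trivial one. No new density computations are required beyond what \autoref{observation:connected-or-2-components} already supplies.
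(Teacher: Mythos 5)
Your argument is correct. The paper itself states this observation without proof, treating it as immediate, so there is no official proof to match; your contradiction via \autoref{observation:connected-or-2-components} is a perfectly valid way to discharge it. Two small remarks. First, the fact you actually need --- that when $G_{\mathcal T}$ splits into two components, each component contains at least $\tfrac{1}{2}|T_i|$ terminals of every $T_i \in \mathcal T$ --- appears only inside the \emph{proof} of \autoref{observation:connected-or-2-components}, not in its statement, so strictly speaking you are re-invoking that density argument rather than the bare statement; it would be cleaner to say so explicitly (or to cite the ``exactly half'' remark made there). Second, there is an even more elementary direct route that skips the dichotomy entirely: since the trivial set $T$ induces a connected subgraph, it lies in a single component of $G_{\mathcal T}$, and by pairwise density every terminal $t$ in any other set $T_j \in \mathcal T$ has at least $\tfrac{1}{2}|T| \geq \tfrac{1}{2} > 0$, hence at least one, neighbour in $T$; so every vertex of $G_{\mathcal T}$ is adjacent to the component containing $T$, and $G_{\mathcal T}$ is connected. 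Both proofs are fine; the direct one avoids any dependence on the internal claims of \autoref{observation:connected-or-2-components}.
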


Let us now define the notion of ``triplets''.
We say that three distinct terminal sets $T_i, T_j, T_\ell$ form a \emph{triplet} if $G[T_i \cup T_j \cup T_\ell]$ is connected.

Now, we claim that triplets provide connectivity for larger structures.
\begin{lemma}
\label{lemma:triplet-inside-connected}
  Let $\mathcal T$ be a family of terminal sets such that $G_{\mathcal T}$ is connected.
  If~$\mathcal T$ contains at least four sets then it contains a triplet.
\end{lemma}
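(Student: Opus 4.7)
The plan is to argue by contradiction: assume $\mathcal{T}$ contains no triplet. Then for every three sets $T_i,T_j,T_\ell\in\mathcal{T}$, the graph $G[T_i\cup T_j\cup T_\ell]$ is disconnected, so by \autoref{observation:connected-or-2-components} it has exactly two components, each containing half of each of $T_i,T_j,T_\ell$. Using $|\mathcal{T}|\ge 4$, for every pair $T_i,T_j$ we can pick a third set $T_\ell\in\mathcal{T}$ and restrict the two triple-components down to $T_i\cup T_j$ to conclude that $G[T_i\cup T_j]$ is itself disconnected into two halves (no edges cross the restricted partition in $G$).

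Next I would extract a canonical bipartition $T_i = A_i \cup B_i$ of each terminal set into two halves. The key observation is that the bipartition of $T_i$ induced by any pair $(T_i,T_j)$ coincides with the bipartition obtained by restricting the two components of any triple $(T_i,T_j,T_\ell)$ to $T_i$, and hence coincides with the bipartition induced by $(T_i,T_\ell)$; so the bipartition of $T_i$ is intrinsic to $T_i$ and does not depend on the chosen partner. I would then fix a reference set $T_1$, label its halves $A_1,B_1$ arbitrarily, and for each $i\neq 1$ designate $A_i$ as the half of $T_i$ lying in the same component as $A_1$ inside $G[T_1\cup T_i]$. The main consistency check is that for arbitrary $i,j\neq 1$, the halves $A_i$ and $A_j$ also share a component of $G[T_i\cup T_j]$; this follows by inspecting the triple $(T_1,T_i,T_j)$, whose component containing $A_1$ intersects $T_i$ in $A_i$ and $T_j$ in $A_j$ by definition, and then restricting to $T_i\cup T_j$.

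Finally, I would combine these labelings into a bipartition $V(G_\mathcal{T}) = \bigl(\bigcup_i A_i\bigr)\cup\bigl(\bigcup_i B_i\bigr)$ with no crossing edges, since any edge of $G_\mathcal{T}$ lies inside some pair $G[T_i\cup T_j]$ whose two components are precisely $A_i\cup A_j$ and $B_i\cup B_j$. Thus $G_\mathcal{T}$ is disconnected, contradicting the hypothesis, so a triplet must exist. The step I expect to be the most delicate is the global consistency claim: ensuring that committing to a single reference set $T_1$ coherently orients the bipartition of every other set, which is exactly what the triple-restriction argument delivers by funneling any two sets through a common reference triple.
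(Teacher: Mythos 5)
Your proof is correct and takes essentially the same approach as the paper's: assuming no triplet exists, you exhibit a global bipartition of $G_{\mathcal T}$ with no crossing edges and thereby contradict connectivity, where the paper builds this bipartition by induction on the number of sets while you fix a reference set and verify pairwise consistency by funneling each pair through a common triple. The key mechanism --- restricting the two components of a disconnected triple down to a pair to align the halves --- is identical in both arguments, and your consistency check is if anything spelled out more explicitly than the paper's inductive step.
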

\begin{proof}
  Suppose, for sake of contradiction, that $\mathcal T$ contains no triplets.
  We will show that this implies that $G_{\mathcal T}$ is disconnected, yielding the desired contradiction.

  Let $\mathcal T = \{T_1,\dots,T_{|\mathcal T|}\}$.
  We will show that $G[\bigcup_{i=1}^{z}T_i]$ consists of two connected components, for $z = 2,\dots,|\mathcal T|$.

  Let $z = 2$.
  Observe that there is no pair $T_i, T_j \in \mathcal T$ such that $G[T_i \cup T_j]$ is connected, for otherwise, for any $T_\ell \in \mathcal T$, sets $T_i, T_j, T_\ell$ would form a triplet.

  Now let $z > 2$ and suppose the claim is true for all smaller values of $z$.
  Let $C_1,C_2$ denote the two connected components of $G[\bigcup_{i=1}^{z}T_i]$.
  For $i =1,\dots,z$, let $A_i=V(C_1)\cap T_i$ and $B_i=V(C_2)\cap T_i$.
  Consider now the two components $D_1,D_2$ of $G[T_z \cap T_{z+1}]$.
  Then $A_z = D_1 \cap T_z$ or $A_z = D_2 \cap T_z$; for otherwise, $T_1, T_z, T_{z+1}$ would form a triplet.
  Assume, without loss of generality, that $A_z = D_1 \cap T_z$.
  Let now $A_{z+1} = D_1 \cap T_{z+1}$ and $B_{z+1} = D_2 \cap T_{z+1}$.

  To finish the proof, it suffices to show that nodes from $A_{z+1}$ are not connected to nodes in $B_i$ for $i < z$.
  To this end, note that if $A_{z+1}$ was connected to some node in $B_i$, then $T_i, T_z, T_{z+1}$ would form a triplet.
\qed
\end{proof}

Next, we claim that there is an optimal solution $F^\star$ which has at most one tree $H^\star_i$ with rank larger than 3.
To see this, assume that $F^\star$ contains two trees $H^\star_i,H^\star_j$ of rank larger than 3.
By \autoref{lemma:triplet-inside-connected}, $H^\star_i$ contains a triplet~$\mathcal T'$.
Thus, we can replace $H^\star_i,H^\star_j$ by a spanning tree of triplet $\mathcal T'$, and a tree spanning the terminals in $H^\star_j$ and remaining terminals of $H^\star_i$.
This change does not increase the cost of the solution.

We are ready to describe our approximation algorithm for {\sc Steiner Forest} on pairwise $\frac{1}{2}$-dense instances.
Consider an instance $(G;T_1,\dots,T_k)$.

As a first step, the algorithm deletes all trivial terminal sets from the instance except one.
This step is justified as follows.
Recall that a tree~$H_i$ of a solution is special if it contains a trivial set $T_i$ and has rank larger than~$1$.
We claim that there exists optimal solution $F^\star$ which has at most one special tree.
Too see this, assume that $H^\star_i,H^\star_j$ are special trees in $F^\star$, and $T_i,T_j$ are corresponding trivial sets inside $H^\star_i$ and $H^\star_j$, respectively.
Now, we can replace $H^\star_i,H^\star_j$ by a spanning tree of $T_i$ as a single tree, and a tree spanning~$T_j$ and all terminals outside $T_i\cup T_j$ which are spanned by $H^\star_i,H^\star_j$.
Here we used \autoref{observation:contains-trivial}.
It is also easy to see that it is arbitrary which trivial set $T$ is used to construct a special tree in $F^\star$, as by \autoref{observation:contains-trivial} its choice does not affect the cost of the solution.


After this preprocessing, the algorithm reduces the problem to an instance of {\sc 3-Set-Packing}.
The full algorithm is described in \autoref{algorithm:apx-1-2-dense}.
\begin{algorithm}[h!]
  \caption{Approximation algorithm for pairwise $\frac{1}{2}$-dense {\sc Steiner Forest}\label{algorithm:apx-1-2-dense}}
  \begin{algorithmic}[1]
    \Require A graph $G$ and terminal sets $T_1,\dots,T_k$ such that for each $t\in T_i$ and $j\in[k]$, $|N_{T_j}(t)|\geq |T_j|/2$.
    \Ensure A $(\frac{13}{12}+\varepsilon)$-approximate Steiner forest $F$ for $(G;T_1,\dots,T_k)$.
    \item Set $F=\emptyset$, $\varepsilon > 0$, and let ${\cal T}_{triv}$ be the set of all trivial terminal sets.
    \item Construct an instance $I_{\textsc{pack}} = (U,\mathcal F)$ of {\sc 3-Set Packing}: let $\mathcal{U} = \{T_1,\dots,T_k\} \setminus {\cal T}_{triv}$, and let $\mathcal F$ be as follows:
      \begin{enumerate}
        \item Add to $\mathcal F$ all pairs $(T_i, T_j)$ of terminal sets in $\mathcal{U}$ for which $G[T_i \cup T_j]$ is connected.
        \item Add to $\mathcal F$ all triplets from $\mathcal{U}$.
      \end{enumerate}
    \item Obtain a $\left(\frac{4}{3-\varepsilon}\right)$-approximate solution $\mathcal F_{\textsc{pack}}\subseteq \mathcal F$ to $I_{\textsc{pack}}$. 
    \item If ${\mathcal F }_{\textsc{pack}}= \emptyset = {\cal T}_{triv}$, and $\mathcal{U} \neq \emptyset$, then output ``No solution" and exit.
    \item Add to $F$ spanning trees of every trivial set in ${\cal T}_{triv}$.
    \item Add to $F$ spanning trees of the sets in $\mathcal F_{\textsc{pack}}$.
    \item If some terminals remain uncovered, then pick any tree in $F$ and extend it to connect these remaining terminals into a single tree.
    \item Return $F$.
  \end{algorithmic}
\end{algorithm}

It remains to analyze the approximation ratio of \autoref{algorithm:apx-1-2-dense}.
Let $t = |{\cal T}_{triv}|$, let $\textrm{opt}_{\textsc{pack}}$ be the value of an optimal solution to $I_{\textsc{pack}}$, and let $\textrm{sol}_{\textsc{pack}}$ be the value of the solution to $I_{\textsc{pack}}$ computed by the $(4/3 + \varepsilon)$-approximation algorithm for {\sc 3-Set Packing}, due to Cygan~\cite{Cygan2013}.
By \autoref{fact:max-cc} and above observations about the structure of optimal solutions, it remains to find an $\alpha$ such that
\begin{equation*}
  |V(G)| - (\textrm{sol}_{\textsc{pack}} + t) \leq \alpha \cdot (|V(G)| - (\textrm{opt}_{\textsc{pack}} + t)) \enspace .
\end{equation*}
Observe now, that $|V(G)| -t \geq 4 \cdot \textrm{opt}_{\textsc{pack}}$, as we removed all trivial sets, and so every terminal set consists of at least two vertices.
Thus, we have that
\begin{align*}
  4\cdot \textrm{opt}_{\textsc{pack}} & \leq |V(G)| - t \\
  \Rightarrow(4+4\varepsilon)\cdot \textrm{opt}_{\textsc{pack}} &\leq (1+\varepsilon)\cdot (|V(G)| - t) \\
    \Rightarrow\left(\frac{13+\varepsilon}{12} - \frac{9-3\varepsilon}{12}\right) \cdot \textrm{opt}_{\textsc{pack}} &\leq \left(\frac{13+ \varepsilon}{12} - 1\right)\cdot (|V(G)| - t) \\
    \Rightarrow(|V(G)| - t) - \frac{3-\varepsilon}{4} \cdot \textrm{opt}_{\textsc{pack}} &\leq \frac{13+\varepsilon}{12}\left(|V(G)| - t - \textrm{opt}_{\textsc{pack}}\right)\\
    \Rightarrow|V(G)| - (\textrm{sol}_{\textsc{pack}}+t) &\leq \frac{13+\varepsilon}{12} \cdot \left(|V(G)| - (\textrm{opt}_{\textsc{pack}}+t)\right),
\end{align*}
where in the last line we used that $\textrm{sol}_{\textsc{pack}} \geq \frac{3-\varepsilon}{4} \cdot \textrm{opt}_{\textsc{pack}}$.
This shows that we can set $\alpha=\frac{13+\varepsilon}{12}$ for any $\varepsilon > 0$.
This completes the proof of \autoref{thm:1-2-pairwise-dense}.

\section{Hardness of Approximation for Dense Group Steiner Tree}
\label{sec:group-st}
In this section, we show that the {\sc Dense Group Steiner Tree} is $\mathsf{APX}$-hard. 
Our reduction takes an instance $(U,\mathcal S)$ of {\sc Set Cover}, where $U$ is an $n$-element set and $\mathcal S$ is a family of non-empty subsets of $U$, and constructs an instance of {\sc Group Steiner Tree} as follows.
For each element $u\in U$ create a terminal group $T_u = \{e_{u,1}, e_{u,2}, e_{u,k}\}$ where $k = |\mathcal S|$.
For each set $S_j\in\mathcal S$ create a node $s_j$ and connect $s_j$ to all nodes $e_{u,j}$ for which $u \in S_j$.
Put all the $s_1, \dots, s_k$ into a single group $T_0$.
Finally, add a root node $r$ forming a single group $T_r$ and connect all nodes $s_j$ to $r$.
See \autoref{fig:dense-group-hardness} for an illustration.
\begin{figure}[ht]
  \centering
  \includegraphics[width=0.7\textwidth]{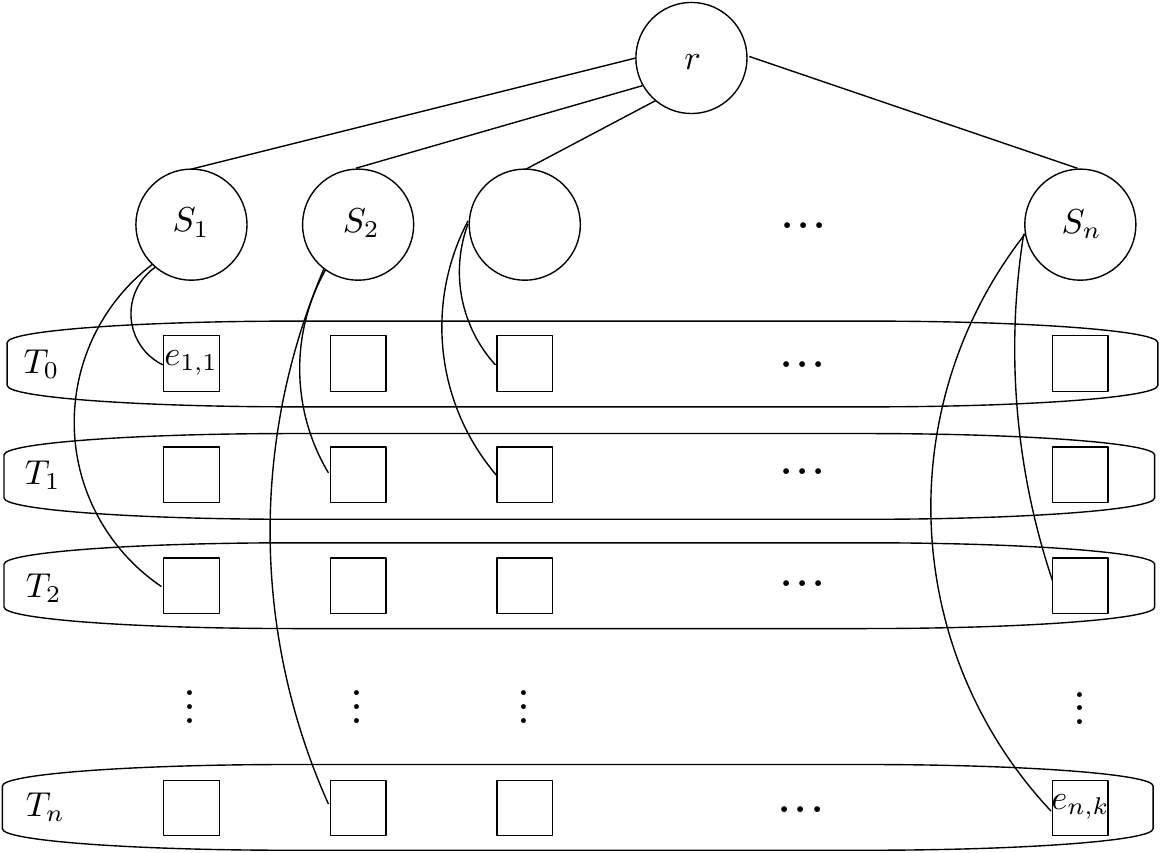}
  \caption{Reducing {\sc Set Cover} to {\sc Group Steiner Tree} on dense instances.\label{fig:dense-group-hardness}}
\end{figure}

Observe, that each node belongs to the exactly one group, so there are no Steiner vertices.
Therefore, the density condition holds trivially.
Now any feasible solution to instance $(U,\mathcal S)$ of cost $c$ corresponds to a feasible solution to instance $(G;T_0,T_1,\dots,T_{|U|},T_r)$ of cost $n+c$, and vice-versa.
As {\sc Set Cover} is hard to approximate with factor $\Theta(\log n)$, as shown by Feige~\cite{Feige1998}, and any optimal solution to $(U,\mathcal F)$ can be assumed have value $\Theta(n)$, the $\mathsf{APX}$-hardness of {\sc Group Steiner Tree} for dense instances follows.

\medskip
\noindent
\textbf{Acknowledgements.} M.L. and S.M.M. were supported by NCN grant number 2015/18/E/ST6/00456.
M.M. was supported by DFG grant MN~59/4-1.
We thank Jaros{\l}aw Byrka for helpful discussions.

\bibliographystyle{abbrv}
\bibliography{references}

\begin{thebibliography}{1}

\bibitem{AgrawalKR1995}
A.~Agrawal, P.~N. Klein, and R.~Ravi.
\newblock When trees collide: An approximation algorithm for the generalized
  {S}teiner problem on networks.
\newblock {\em {SIAM} J. Comput.}, 24(3):440--456, 1995.

\bibitem{ByrkaGRS2013}
J.~Byrka, F.~Grandoni, T.~Rothvo{\ss}, and L.~Sanit{\`{a}}.
\newblock Steiner tree approximation via iterative randomized rounding.
\newblock {\em J. {ACM}}, 60(1):6:1--6:33, 2013.

\bibitem{Cygan2013}
M.~Cygan.
\newblock Improved approximation for 3-dimensional matching via bounded
  pathwidth local search.
\newblock In {\em Proc. FOCS 2013}, pages 509--518, 2013.

\bibitem{DinurS2004}
I.~Dinur and S.~Safra.
\newblock On the hardness of approximating minimum vertex cover.
\newblock {\em Ann. Math.}, 162:2005, 2004.

\bibitem{Feige1998}
U.~Feige.
\newblock A threshold of $\ln(n)$ for approximating set cover.
\newblock {\em J. {ACM}}, 45(4):634--652, 1998.

\bibitem{Hauptmann2013}
M.~Hauptmann.
\newblock On the approximability of dense {S}teiner problems.
\newblock {\em J. Discrete Algorithms}, 21:41--51, 2013.

\bibitem{KarpinskiZ1997}
M.~Karpinski and A.~Zelikovsky.
\newblock New approximation algorithms for the {S}teiner tree problems.
\newblock {\em J. Comb. Optim.}, 1(1):47--65, 1997.

\bibitem{WilliamsonGMV1995}
D.~P. Williamson, M.~X. Goemans, M.~Mihail, and V.~V. Vazirani.
\newblock A primal-dual approximation algorithm for generalized {S}teiner
  network problems.
\newblock {\em Combinatorica}, 15(3):435--454, 1995.

\end{thebibliography}

\appendix
\section{Proof of Theorem 1}
\label{sec:proof-unit-weight-thm}
\begin{proof}
  We give a reduction from {\sc  Vertex Cover} in graphs of bounded maximum degree.
  The reduction has two steps: first we will reduce to unit-weight {\sc Steiner Tree} (with Steiner vertices), and second replace Steiner nodes by pairs of terminals, obtaining the desired {\sc Steiner Forest} instance.

  Let $G$ be a graph constituting an instance of {\sc Vertex Cover}.
  Let $G'$ be the graph whose node set $V(G')$ is a union of $V(G)$ and $E(G)$, and designate all nodes in $G'$ corresponding to edges of $G$ as terminals: so $T = E(G)$.
  Now, connect each terminal in $G'$  to two Steiner nodes corresponding to the endpoints of an edge in~$G$.
  Finally, connect all the Steiner vertices $V(G)$ with each other, so that they form a clique.

  We claim that each vertex cover $C$ in $G$ corresponds to a Steiner tree of size $E(G) + |C| - 1$ for $(G',T)$.
  This can be seen by taking an edge for each terminal to some node corresponding to a node of $C$, and then adding a spanning tree of nodes corresponding to a vertex cover.
  As {\sc Vertex Cover} is $\mathsf{APX}$-hard in graphs of bounded maximum degree~\cite{DinurS2004} and $k = \Theta(|V(G)|) = \Theta(|E(G)|)$, the $\mathsf{APX}$-hardness for {\sc Steiner Tree} follows.

  For the second part of the reduction, for each node $v\in E(G)\subseteq V(G')$ add a node $v'$ which is connected by a single edge to $v$.
  Add also set $T_v = (v,v')$ to a family of terminal sets to connect in the {\sc Steiner Forest} instance.
  Finally, make $V(G)$ also a set of terminals to connect.
  The complete instance is depicted in \autoref{fig:reduction-sf-no-steiner}.

  To finish the proof, observe  that solution size increases by $|V(G)|$, which is of the same order as the initial solution.

\begin{figure}[b]
    \centering
    \includegraphics[width=0.7\textwidth]{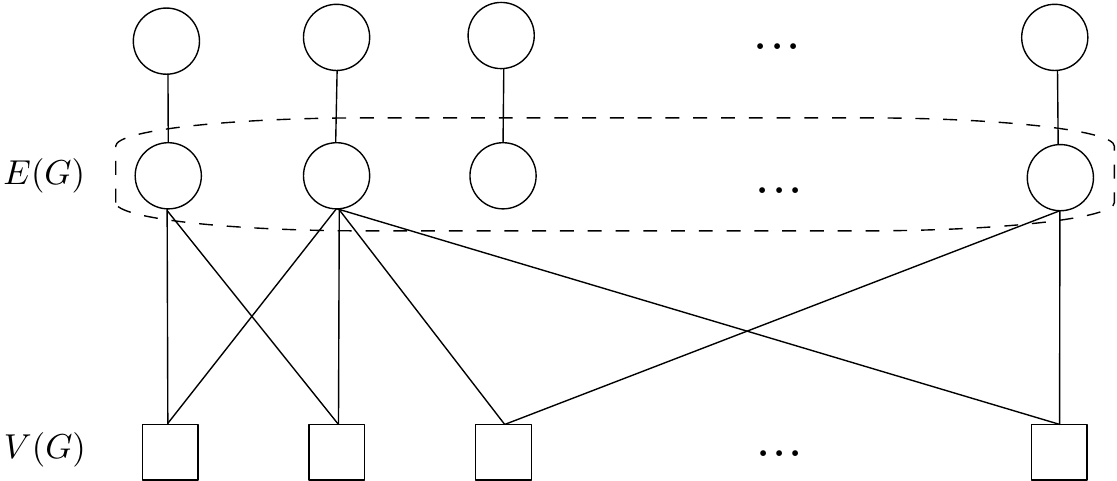}
    \caption{Reduction from {\sc Vertex Cover} in bounded-degree graphs to unit-weight {\sc Steiner Forest} without Steiner nodes.
    The dotted line indicates that nodes of $E(G)$ are connected in a clique.
    The family of terminal sets to connect consists of a set of squares and $|E(G)|$ pairs of circles.\label{fig:reduction-sf-no-steiner}}
  \end{figure}
\qed
\end{proof}

\end{document}